\newcommand{\indicator}[1]{\textbf{1}_{\left[{#1}\right]}}
\newcommand{\vValue}{\psi}
\newcommand{\shadingFunc}{\beta}
\newcommand{\Expb}[1]{\mathbb{E}\left(#1\right)}
\newcommand{\eps}{\epsilon}
\newcommand{\tendsto}{\rightarrow}
\newcommand{\Unif}{\mathcal{U}}
\newtheorem{theorem}{Theorem}
\newtheorem{lemma}{Lemma}
\newtheorem{definition}{Definition}
 \newcommand{\argmax}[1]{\underset{#1}{\operatorname{arg}\,\operatorname{max}}\;}
\title[Adversarial Learning in Revenue-Maximizing Auctions]{Adversarial Learning in Revenue-Maximizing Auctions}
\author{Thomas Nedelec}
\affiliation{
 \institution{Criteo AI Lab, ENS Paris Saclay}}
\email{thomas.nedelec@polytechnique.org}
\author{Jules Baudet}
\affiliation{
 \institution{Ecole Polytechnique}}
\email{jules.baudet99@gmail.com}
\author{Vianney Perchet}
\affiliation{
  \institution{ENSAE, Criteo AI Lab}}
\email{vianney@ensae.fr}
\author{Noureddine El Karoui}
\affiliation{
  \institution{Criteo AI Lab, UC, Berkeley}}
\email{nkaroui@berkeley.edu}
\begin{abstract}
We introduce a new numerical framework to learn optimal bidding strategies in repeated auctions when the seller uses past bids to optimize her mechanism. Crucially, we do not assume that the bidders know which optimization mechanism is used by the seller. We recover essentially all state-of-the-art analytical results for the single-item framework  derived previously in the setup where the bidder knows the optimization mechanism used by the seller and extend our approach to multi-item settings, in which no optimal shading strategies were previously known. Our approach yields substantial increases in bidder utility in all settings and has a strong potential for practical usage since it provides a simple way to optimize bidding strategies on modern marketplaces where buyers face unknown data-driven mechanisms.  

\end{abstract}
\keywords{Auction Theory; Adversarial Learning; Strategic bidder}
\newcommand{\BibTeX}{\rm B\kern-.05em{\sc i\kern-.025em b}\kern-.08em\TeX}
\begin{document}


\pagestyle{fancy}
\fancyhead{}


\maketitle 

\section{Introduction} 

Repeated auctions are widely used in modern economic systems to sell a variety of items such as ad placements on the Internet. In online marketplaces, most auctions are designed using techniques at the junction of classical auction theory \cite{Myerson81} and statistical learning theory. Sellers take advantage of the enormous amount of data  gathered on buyers' behavior and strategies - through billions of auctions a day - to learn and implement revenue maximizing auctions on different platforms.

In the case of single-item auctions, the design of an optimal incentive-compatible revenue-maximizing auction is well understood \cite{Myerson81}, assuming  the seller knows the value distribution of each buyer.  Indeed, under this perfect knowledge assumption, she can define the allocation and payment rules maximizing her expected revenue. The multi-item framework is more intricate. Myerson's fundamental result has been extended to specific settings depending on the number of objects and on the properties of the bidders' utility functions  \cite{armstrong1996multiproduct,manelli2007multidimensional,daskalakis2013mechanism,yao2017dominant}.  A general and analytical optimal auction in the multi-item framework has yet to be found. 

Because of the large variety of different multi-item settings, \emph{automatic mechanism design} has been introduced to provide a framework for learning revenue-maximizing mechanisms satisfying constraints chosen by the designer \cite{conitzer2002complexity,albert2017automated}. This framework was recently complemented by the introduction of neural networks  for different instances of the multi-item problem \cite{dutting2017optimal,shen2019automated,golowich2018deep} to take advantage of the large expressivity power of neural networks architectures.

This line of research traditionally  assumes that bidders' value distributions are known to the seller. However, in practice, the seller does not have access to such information and can only statistically estimate these distributions using a finite sample of bids made in past auctions \cite{OstSch11,cole2014sample,morgenstern2015pseudo,huang2018making,balcan2018general}. We can represent this as a two-stage game between a seller and buyers. The first stage consists in a sequence of truthful auctions (say, second price auctions without reserve price or with random reserve prices) where  bidders are assumed to bid truthfully. This will provide the seller with a batch of i.i.d. samples  from the different value distributions (since, in truthful auctions, observed bids are equal to unobserved values). Under this truthfulness assumption, the seller can compute the empirical  revenue-maximizing auction, based on the bid samples collected in the first stage \cite{paes2016field,medina2017revenue,shen2019learning,derakhshan2019lp}. 

\begin{figure*}[t]
\center
\begin{tikzpicture}[
    block/.style={rectangle, draw, text centered, rounded corners,     minimum
    height=4em,text width=5.5em}
]
	
	\node[thick] (z) {Strategic bidder};
	\node[above=of z, thick] (xinit) {Dynamic game};
	[
	\node[above=of xinit, thick] (xinit2) {Strategic seller};

	\node[rectangle, draw, thick, right=1.0em of xinit, text width=1.8cm] (bidderbox) {$m$ auctions of type $M_0$} ;
	
	\node[circle, draw, thick, right=1.5em of bidderbox, text width=0.6cm] (dataset) {$S_m^0$} ;
	
	\node[rectangle, draw, thick, right=2.0em of dataset, text width=1.8cm] (mech1) {$m$ auctions of type $\mathcal{M}_1(S_m^0)$} ;

	\node[circle, draw, thick, right=2.0em of mech1, text width=0.6cm] (dataset2)  {$S_m^1$} ;

	\node[rectangle, draw, thick, right=2.0em of dataset2, text width=1.8cm] (mech2) {$m$ auctions of type $\mathcal{M}_2(S_m^1)$} ;

	\node[ thick, right=1.0em of mech2, text width=1.5cm] (end) {$\dots$} ;
	
	\draw[-stealth] (bidderbox) --  (dataset);
	\draw[-stealth] (dataset) --  (mech1);
	\draw[-stealth] (mech1) --  (dataset2);
	\draw[-stealth] (dataset2) --  (mech2);
	\draw[-stealth] (mech2) --  (end);

	\node[right=0.5em of dataset, circle, fill, inner sep=0.15em] (pt2) {};
	\node[right=0.5em of dataset2, circle, fill, inner sep=0.15em] (pt4) {};
	
	\node[rectangle, draw, thick, below= of bidderbox] (bidderstrat1) {$\beta_{i,0}$} ;
	\node[rectangle, draw, thick, above= of pt2] (seller1) {$\mathcal{M}_1$} ;
	\node[rectangle, draw, thick, above= of pt4] (seller2) {$\mathcal{M}_2$} ;
	\node[rectangle, draw, thick, below= of mech1] (bidderstrat2) {$\beta_{i,1}$} ;
	\node[rectangle, draw, thick, below= of mech2] (bidderstrat3) {$\beta_{i,2}$} ;

	\draw[-stealth] (bidderstrat1) --  (bidderbox);
	\draw[-stealth] (bidderstrat2) --  (mech1);
	\draw[-stealth] (bidderstrat3) --  (mech2);

	\draw[-stealth] (seller1) --  (pt2);
	\draw[-stealth] (seller2) --  (pt4);
			
			
			

\end{tikzpicture}
\caption{\textbf{A general dynamic game for the batch auction setting.} During the first stage of the game, the seller, as she has no information on bidders' value distribution, runs a batch of $m$ second-price auctions without reserve price (mechanism $M_0$). The strategic bidder is using the strategy $\beta_{i,0}$ for this first batch of auctions. The seller has then access to a dataset of $m$ bids $S_m^0 = \{b_1^0,...,b_m^0\}$ corresponding to these first $m$ auctions. She can then use a learning algorithm $\mathcal{M}_1$ to compute a new mechanism $\mathcal{M}_1(S_m^0)$   that would be used for the second stage. In full generality, at each stage, the bidder can change of bidding strategy and the seller can change of learning algorithm.  
} 
 \label{dynamic_game}
\end{figure*}
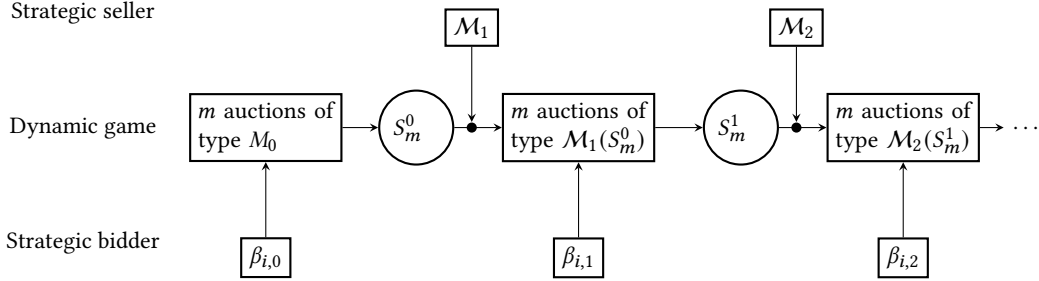
However,  bidders might have been  strategic in the first round, in order to  maximize their long-term utility.  Several approaches have been introduced for the seller to disincentive bidders from being strategic. A solution is to compute the  reserve price of a bidder using information stemming solely from the other ones \cite{ashlagi2016sequential,kanoria2017dynamic,epasto2018incentive}. This approach is theoretically sound, but practically limited as it requires that bidders have similar value distributions. For instance, it cannot handle heterogeneous settings with a dominant buyer \cite{epasto2018incentive} as the optimal reserve price of the latter can not be computed from bids of the others. Unfortunately, this is precisely the scenario where revenue optimizing mechanisms unveil their full potential. Another line of research assumes that bidders value the future a lot less than the seller \cite{amin2014repeated,golrezaei2018dynamic} by considering  discount factors of different magnitude orders. Although necessary to theory, this technique  introduces an artificial asymmetry between bidders and seller in order to force bidders to bid truthfully in the seller learning  phase (or at least during a significant fraction of it). With the same type of assumptions, \cite{braverman2018selling,deng2019prior} recently showed how the seller can extract the full welfare when bidders are using zero-regret type algorithms, leaving open the question of deriving good bidding strategies in such settings. 

If there does not exist assymetries between bidders and seller, bidders can actually adapt to automatic mechanisms in single-item auctions by being strategic in the first stage \cite{tang2016manipulate,nedelec2018thresholding,nedelec2019learning} . A new class of skewing or shading strategies was introduced for the lazy second price auction with monopoly reserve prices, and for the Myerson auction \cite{tang2016manipulate,nedelec2018thresholding}. The major and prohibitive drawback of these approaches is that they require that strategic bidders perfectly know the underlying mechanism design problem (i.e. the revenue maximization problem) solved by the seller. We aimed at exploring more dynamic gradient-based games where the seller and the buyers are using gradient-based algorithms.

Our starting point is a recent end-to-end learning approach  \cite{dutting2017optimal} computing revenue-maximizing auctions in various multi-item settings. This approach assumes that the seller can  generate samples from the value distributions of the bidders to update her mechanism. The mechanism is then parametrized by two neural networks corresponding respectively to the two rules, allocation and payment, that define a mechanism. These networks are trained to maximize the revenue of the seller under the incentive compatibility constraint. Inspired by the recent line of work that focuses on possible adversarial attacks on standard learning systems \cite{papernot2017practical}, we aim at  exploring manipulation opportunities for bidders in such learning approaches \cite{cai2015optimum,zhang2019samples}. 

Our contributions are the following. We introduce a new numerical framework to study economic interactions when several agents use learning algorithms based on data provided by other rational agents. We focus specifically on how bidders can find good bidding strategies when facing  mechanisms such as those introduced by \cite{dutting2017optimal}. From a game-theoretic standpoint, we cast the overall interaction between players as a Stackelberg game where bidders play first - hence are  leaders -, and the seller is the follower, playing second. We emphasize here that  we do not assume that bidders know the rules/algorithms/processes used by the seller to optimize her revenue; instead they discover them through a classic explore-exploit trade-off. We improve on recent single-item approach \cite{nedelec2019learning}, by removing the prior knowledge on  the exact algorithmic procedure used by the seller to optimize her mechanism. 

More precisely, we first solve the idealized setting where the seller is implimenting the exact Myerson auction corresponding to bidders' bid distributions. We then introduce some multi-agent gradient-based games between a seller, a strategic bidder and some non-strategic bidders. We consider the single-item and the multi-item setting. Inspired by reinforcement learning techniques, we introduce an exploration policy corresponding to a distribution over possible strategies and use classical policy optimization algorithms to tune the parameters of our policy. 
Furthermore, our approach elicits new shading strategies in classical and cutting-edge settings of the multi-item literature. 
For instance, we obtain a 54\% uplift in utility in the $2$ bidders and $2$ objects framework with one strategic bidder, where bidders have additive valuations and uniform value distributions between $0$ and $1$. 

This constitutes a first benchmark of the impact of strategic behavior on multi-item revenue maximizing auctions' performances. The implementation of our experimental setup in PyTorch is provided in the supplementary material.

\section{Auction design and Stackelberg games}
Classical mechanism design literature usually studies the Stackelberg game where the seller is the leader, and chooses a mechanism knowing the bidders' value distributions \cite{conitzer2006computing}. We assume that the seller does not have prior knowledge of bidders' value distributions and consider the reverse Stackelberg game where the bidders are leaders. Bidders are able to choose what value distribution to submit and thus impact the mechanism chosen by the designer.
\subsection{Notations}

\newcommand{\numItems}{\textsf{m}}
We consider the setting with a set of $n$ bidders and $\textsf{m}$ items with $M=\{1,\dots,\numItems\}$.  We denote by $v_i:\{0,1\}^M\to\mathbb{R}_{\geq 0}$ the valuation function of bidder $i$. For any bundle of items $S \subseteq M$, $v_i(S)$ represents how much bidder $i$ values the bundle $S$. As classically done in the auction literature, we assume that bidders' valuations are drawn independently from value distributions that we denote by $\{F_i\}_{i\in\{1,\dots,n\}}$. We denote by $F= F_1 \times \dots \times F_n$ their product distribution.  
\\
A bidding strategy $\beta_i:\mathbb{R}^{2^\numItems} \rightarrow \mathbb{R}^{2^\numItems}$ is a mapping from values to bid. We denote $\beta=(\beta_1,\dots,\beta_n)$ and write $\beta_{-i}$ the set of strategies without that of bidder $i$. The bid distribution $F_{B_i}$ is the distribution of bids induced by using $\beta_i$ on $F_i$. We denote by $\vec{b_i}$ the vector of bid submitted by bidder 
$i$ and $\mathcal{B}=\{(\vec{b_1},\dots,\vec{b_n}), \vec{b_i}\in\mathbb{R}^{2^\numItems}\}$ the set of all possible bid profiles.

A mechanism is a pair $m=(a,p)$ consisting of an allocation rule $a_i:\mathcal{B}\to \{0,1\}^{2^\numItems}$  and a payment rule $p_i:\mathcal{B}\to \mathbb{R}_{\geq 0}^{2^\numItems}$. For bids $\vec{b}=(\vec{b}_1,\dots,\vec{b}_n)$, $a_i(\vec{b})$ gives the allocation of the items, $p_i(\vec{b})$ the payment for each bidder and $u_i(\vec{b}) = a_i(\vec{b})(x_i-p_i(\vec{b}))$ the utility of bidder $i$.

The seller's revenue in an auction $(a,p)$ given bidding strategies $\{\beta_i\}_{i\in \{ 1,n\}}$ is defined as 
$$R(m,\beta) = \mathbb{E}_{F}\bigg(\sum_j a_j(\vec{B_1},\dots,\vec{B_n})p_j(\vec{B_1},\dots,\vec{B_n})\bigg)$$
where $\vec{B_i} = \beta_i(\vec{X}_i)$ and $\vec{X}_i$ is randomly drawn from $F_i$.
The utility of bidder $i$ is defined as : 
$$U_i(m,\beta) = \mathbb{E}_{F}\bigg(\left[X_i-p_i(\vec{B_1},\dots,\vec{B_n})\right]a_i(\vec{B_1},\dots,\vec{B_n})\bigg)$$

We will denote by $\beta_{Id}$ the truthful strategy corresponding to a player bidding his own valuation.
\begin{figure*}[t]
\centering

	\includegraphics[width=0.7\textwidth]{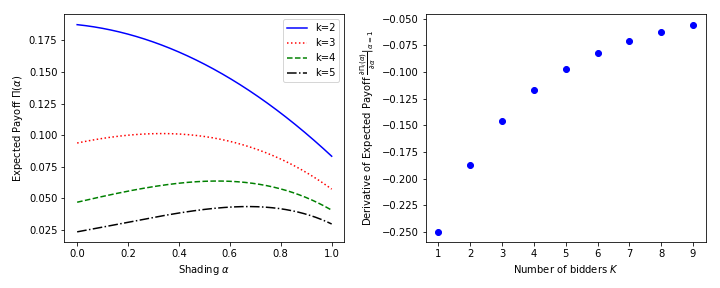}
	\caption{\textbf{Myerson auction : Expected payoff and its derivative for one bidder with linear shading} There are $K$ bidders with values $\Unif[0,1]$, only one of them is strategic. On the left hand side, we present a plot of the expected payoff $U_i(\alpha_i)$ of the strategic bidder for several values of $K$. On the right hand side, we present the derivative $\left.\frac{\partial U_i(\alpha)}{\partial \alpha}\right|_{\alpha=1}$ taken at the truthful bid ($\alpha=1$).}
    \label{fig:uniform_payoff_myerson}	
\end{figure*}

\subsection{Classical seller's learning problem and bidder's strategic answer}
We write the seller's mechanism optimization problem as a learning problem following methods introduced by the \emph{automated mechanism design} literature. Indeed, several works investigate methods for learning optimal mechanisms from data sampled from bidders' true value distributions, using numerical optimization and machine learning techniques.

In the classical framework, the seller seeks to solve the constrained optimization problem consisting of maximizing her revenue under the ex-post incentive compatibility constraint.
An automatic mechanism design algorithm $\mathcal{A}$ takes a class of mechanisms $\mathcal{M}$ and the bidders' value distributions as inputs, and outputs a mechanism solving a given constrained optimization problem.

 The problem of automated mechanism design as first introduced by \cite{conitzer2002complexity} and implemented in practice by \cite{OstSch11} essentially consists in a Stackelberg game where the seller takes bidders' value distributions as given, and enforces them to bid truthfully by choosing a DSIC mechanism. This first type of Stackelberg game takes the seller as leader.

 \begin{definition}[Seller/Bidder Stackelberg game] (Stackelberg) Game in which the seller chooses a mechanism among a class of DSIC mechanisms $\mathcal{M}$ which maximizes her revenue assuming she knows the bidders' value distributions.
 \end{definition}
From this game, we can define the seller's learning algorithm $\mathcal{A}$ solving this Seller/Bidder Stackelberg game. If we denote by $\mathcal{F}$ a class of value distributions and $\mathcal{M}$ a class of mechanisms, a seller's learning algorithm $\mathcal{A}$ is defined as
\begin{align*}
\mathcal{A}:\hspace{0.5cm}  F \hspace{0.5cm}  &\mapsto m(F) = \argmax{m \in \mathcal{M}} {R(m,\beta_{Id})} 
\\&\hspace{0.5cm} s.t.\;\;U_i(m,\beta_i,\beta_{-i}) \leq  U_i(m,\beta_{Id},\beta_{-i}),
\\&\hspace{1.3cm}\forall i \in \{ 1,n\}, \forall \beta_i, \forall \beta_{-i}
\end{align*}
In practice, the fact that the seller uses past bids to estimate bidders' value distributions before optimizing her mechanism in repeated auctions provides bidders with the opportunity to design ``attacks" in order to find bidding strategies that increase their long-term utility. By strategically adjusting their bids, they are able to control the bidding distributions perceived by the seller and used to optimize her mechanism. This corresponds to a new Stackelberg game in which the bidders are the leaders of the game, which is the focus of our work here.

\begin{definition}[Bidder/Seller Stackelberg game]\cite{tang2016manipulate,nedelec2019learning} Stackelberg game in which strategic bidders assume the existence of a seller's learning algorithm $\mathcal{A}$. Each strategic bidder $i$ chooses a strategy $\beta_i$ that induces a (pushforward) bid distribution $F_{B_i} = \beta_i\#F_i$ used as input by the seller's algorithm. The goal of the strategic bidder is to optimize
$$\argmax{\beta_i \in \mathcal{B}_i} {U_i(\mathcal{A}(F_{B_i},F_{-i}),\beta_{i})}$$
\end{definition}

Several approaches have already tackled this problem \cite{tang2016manipulate,nedelec2018thresholding,nedelec2019learning}. In all these papers, the authors assume perfect knowledge of the optimization algorithm used by the seller. Our goal is to extend these approaches by getting rid of the assumption that bidders know the seller's algorithm and by proposing a method that automatically adapts to this new framework.
We provide a general method that applies in particular to general multi-item auctions, and hence to cutting edge auction theoretic results. For these auctions, allocation and payment rules are currently available only through numerical methods such as the one developed by \cite{dutting2017optimal}, which preclude the design of attacks based on analytic understanding of auction rules. We provide a general approach to designing such attacks, proving that the networks introduced by \cite{dutting2017optimal} are not robust to adversarial attacks.

These adversarial ``attacks" could be called Stackelberg responses to black-box automatic mechanism design. They exploit a conceptual opening in most automatic mechanism design works, i.e. the breakdown  of incentive compatibility for the buyer when the seller optimizes over incentive compatible auctions. As such, they differ from standard adversarial attacks in e.g. computer vision, which generally rely on the lack of local robustness of a classifier. Two other features are notable: these ``attacks'' do not necessarily yield lower revenues for the seller \cite{nedelec2018thresholding}; and they are also part of a dynamic game between buyers and seller and as such have a dynamic component that is absent from classical and static machine learning frameworks, such as image classification. 

 \section{An analytical solution to the Myerson Stackelberg game}

To get a sense of what would be the optimum in the perfect information setting, we first focus on the idealized Stackelberg game where the bidder assumes that the seller is using the Myerson auction. This Myerson auction corresponds to the bid distribution induced by the strategic bidder during the seller's learning stage. 

We can use \textit{the Myerson lemma} and show that the expected utility of the strategic bidder using the bidding strategy $\beta$ in the Myerson auction is
$$
U_i(\shadingFunc_i)=\Expb{[X_i-h_{\beta_i}(X_i)] F_Z(h_{\beta_i}(X_i))}\;.
$$
with $F_Z$ the cumulative distribution function of $$Z=\max_{2\leq j \leq K}(0,\vValue_j(X_j)),$$ $X_i$ is the value of bidder $i$, and $h_{\beta_i} = \psi_{B_i}(\beta_i(X_i))$ is the virtual value function associated with the bid distribution.
Suppose that $\shadingFunc \mapsto \shadingFunc_t=\shadingFunc +  t \rho$, where $t>0$ is small and $\rho$ is a function. We note that $h_{\beta+t\rho}(x)=h_\beta+t h_\rho$. We have the following result.

\begin{lemma}\label{lemma:directDerivativeMyersonAuction}
Suppose we change $\beta$ into $\beta_t=\beta+t \rho$. Both $\beta$ and $\beta_t$  are assumed to be non-decreasing. Call $x_{\beta}$ the reserve value corresponding to $\beta$, assume it has the property that $h_\beta(x_\beta) = 0$ and $h_\beta'(x_\beta)\neq 0$ ($h_\beta'$ is assumed to exist locally). Assume $x_\beta$ is the unique global maximizer of the revenue of the seller. Then, 
\begin{align*}
\hspace{-7pt}\left.\frac{\partial}{\partial t} U(\beta_t) \right|_{t=0} 
&=\Expb{h_\rho(X) [(X-h_{\beta}(X))f_Z(h_{\beta}(X))-F_Z(h_{\beta}(X))]\indicator{X>x_{\beta}}}
\\
&+\frac{h_\rho(x_{\shadingFunc})}{h'_{\shadingFunc}(x_{\shadingFunc})}\prod_{i=2}^K F_{V_i}(0) f_{1}(x_{\shadingFunc}) x_{\shadingFunc}\;,\notag
\end{align*}
\end{lemma}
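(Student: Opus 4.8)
The plan is to express the strategic bidder's utility as an ordinary integral over his value $X$ whose \emph{lower limit is the reserve value} $x_\beta$, and then differentiate in $t$ using Leibniz's rule for an integral with a moving endpoint. The two summands in the statement will come out as the interior (integrand) derivative and the boundary term, respectively.

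First I would rewrite $U_i(\beta)=\Expb{[X-h_\beta(X)]F_Z(h_\beta(X))}$ as $\int_{x_\beta}^{\infty}(x-h_\beta(x))F_Z(h_\beta(x))f_1(x)\,dx$. This uses that $Z\geq 0$, so $F_Z(h_\beta(x))=0$ whenever $h_\beta(x)<0$, i.e. whenever $x<x_\beta$ (by monotonicity of $h_\beta$ together with $h_\beta(x_\beta)=0$). The integrand has a jump at $x_\beta$, since $F_Z$ jumps at $0$ from $F_Z(0^-)=0$ up to the atom $F_Z(0)=\lP(Z=0)=\prod_{i=2}^K F_{V_i}(0)$; this jump is exactly what will generate the boundary contribution. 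Setting $g(x,t)=(x-h_{\beta_t}(x))F_Z(h_{\beta_t}(x))f_1(x)$, so that $U(\beta_t)=\int_{x_{\beta_t}}^{\infty}g(x,t)\,dx$, Leibniz's rule gives
$$\left.\frac{d}{dt}U(\beta_t)\right|_{t=0}=\int_{x_\beta}^{\infty}\partial_t g(x,0)\,dx-\left.\frac{dx_{\beta_t}}{dt}\right|_{t=0}\,g(x_\beta,0).$$

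Next I would evaluate the two pieces. For the interior term I use the hypothesis $h_{\beta_t}=h_\beta+t h_\rho$, so $\partial_t h_{\beta_t}=h_\rho$ and hence $\partial_t g(x,0)=h_\rho(x)\big[(x-h_\beta(x))f_Z(h_\beta(x))-F_Z(h_\beta(x))\big]f_1(x)$; for $x>x_\beta$ we have $h_\beta(x)>0$, so $f_Z$ is evaluated on the continuous part of $Z$ and the computation is legitimate. Integrating over $(x_\beta,\infty)$ reproduces the first expectation in the statement, restricted to $\{X>x_\beta\}$. For the boundary term I differentiate the defining relation $h_{\beta_t}(x_{\beta_t})=h_\beta(x_{\beta_t})+t\,h_\rho(x_{\beta_t})=0$ implicitly; at $t=0$ this yields $h'_\beta(x_\beta)\,\tfrac{dx_{\beta_t}}{dt}+h_\rho(x_\beta)=0$, so $\tfrac{dx_{\beta_t}}{dt}\big|_{t=0}=-h_\rho(x_\beta)/h'_\beta(x_\beta)$ (this is where $h'_\beta(x_\beta)\neq 0$ is used). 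Combined with $g(x_\beta,0)=x_\beta F_Z(0)f_1(x_\beta)=x_\beta\prod_{i=2}^K F_{V_i}(0)\,f_1(x_\beta)$ (from $h_\beta(x_\beta)=0$ and the atom identity), the boundary term becomes $+\tfrac{h_\rho(x_\beta)}{h'_\beta(x_\beta)}\prod_{i=2}^K F_{V_i}(0)\,f_1(x_\beta)\,x_\beta$, matching the second term with the correct sign.

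I expect the main obstacle to be the careful bookkeeping at the reserve value $x_\beta$. Because $Z$ has an atom at $0$, the map $x\mapsto F_Z(h_\beta(x))$ is \emph{discontinuous} at $x_\beta$, so one cannot differentiate under the integral with fixed limits and ignore the endpoint: the jump must be tracked through the moving boundary $x_{\beta_t}$, which is precisely what produces the second term. I would also invoke the right-continuity of $F_Z$ to ensure the relevant boundary value is $F_Z(0)=\prod_{i=2}^K F_{V_i}(0)$ (the limit from above), and use the assumption that $x_\beta$ is the unique global maximizer of the seller's revenue to guarantee a single well-defined reserve value. The remaining analytic point — justifying the exchange of differentiation and integration on $(x_\beta,\infty)$ — is routine under mild local regularity and domination of $h_\beta$, $h_\rho$, $f_Z$ and $f_1$, and I would not belabor it.
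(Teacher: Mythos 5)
Your proof is correct and follows essentially the same route as the paper, whose entire proof is the single sentence ``Taking directional derivative of the utility of the bidder gives the equation'': your moving-boundary Leibniz computation --- the interior derivative (using $h_{\beta_t}=h_\beta+t\,h_\rho$) producing the expectation term, and the implicit differentiation of $h_{\beta_t}(x_{\beta_t})=0$ combined with the atom $F_Z(0)=\prod_{i=2}^K F_{V_i}(0)$ producing the boundary term with the correct sign via $h'_\beta(x_\beta)\neq 0$ --- is precisely the calculation the paper leaves unstated. Your observation that the discontinuity of $x\mapsto F_Z(h_\beta(x))$ at $x_\beta$ forces the jump to be tracked through the moving endpoint is the key point that makes the asserted formula come out, so the proposal is a faithful (and more careful) rendering of the paper's argument.
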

\begin{proof}
Taking directional derivative of the utility of the bidder gives the equation.
\end{proof}
As introduced in \cite{nedelec2019learning}, there exists a simple relationship between the virtual value and the bidder's strategy. 
\begin{lemma}\label{definition_psi}
Suppose $B_i=\beta_i(X_i)$, where $\beta_i$ is increasing and differentiable and $X_i$ is a random variable with cdf $F_i$ and pdf $f_i$. Then 
\begin{equation*}\label{eq:ODEPhiG}
h_{\beta_i}(x_i)\triangleq \beta_i(x)-\beta_i'(x)\frac{1-F_{i}(x)}{f_{i}(x)} = \psi_{F_{B_i}}(\beta_i(x)) \;.
\end{equation*}
\end{lemma}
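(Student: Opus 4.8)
The plan is to prove the identity by a direct change-of-variables (pushforward) computation, since $F_{B_i}$ is by definition the image of $F_i$ under the map $\beta_i$. First I would recall the Myerson virtual value attached to a one-dimensional distribution with cdf $G$ and density $g$, namely $\psi_G(b) = b - \frac{1-G(b)}{g(b)}$, so that the quantity to be matched is $\psi_{F_{B_i}}(\beta_i(x)) = \beta_i(x) - \frac{1-F_{B_i}(\beta_i(x))}{f_{B_i}(\beta_i(x))}$. The whole proof then amounts to expressing $F_{B_i}$ and $f_{B_i}$ in terms of the primitives $F_i$, $f_i$ and the strategy $\beta_i$, and substituting.

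The first step is to compute the cdf and density of $B_i$. Because $\beta_i$ is increasing and differentiable, it is invertible with differentiable inverse, so for any bid level $b$ we have $F_{B_i}(b) = \mathbb{P}(\beta_i(X_i)\le b) = \mathbb{P}(X_i \le \beta_i^{-1}(b)) = F_i(\beta_i^{-1}(b))$. Differentiating and applying the inverse-function rule gives $f_{B_i}(b) = f_i(\beta_i^{-1}(b))/\beta_i'(\beta_i^{-1}(b))$. The second step is to evaluate these at the bid $b = \beta_i(x)$: since $\beta_i^{-1}(\beta_i(x)) = x$, we get $F_{B_i}(\beta_i(x)) = F_i(x)$ and $f_{B_i}(\beta_i(x)) = f_i(x)/\beta_i'(x)$. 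Plugging into the virtual value formula and simplifying $\frac{1-F_i(x)}{f_i(x)/\beta_i'(x)} = \beta_i'(x)\frac{1-F_i(x)}{f_i(x)}$ yields exactly $\beta_i(x) - \beta_i'(x)\frac{1-F_i(x)}{f_i(x)} = h_{\beta_i}(x)$, which is the claimed equality.

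There is essentially no substantive obstacle here; the only points that require care are the regularity hypotheses that make the change of variables legitimate. Specifically, one needs $\beta_i$ strictly increasing (so that $\beta_i^{-1}$ exists and the identity $F_{B_i}(b) = F_i(\beta_i^{-1}(b))$ holds exactly), differentiable with $\beta_i' > 0$ on the support (so that the inverse-function rule applies and $f_{B_i}$ is well-defined and positive), and $f_i > 0$ on the support (so that the virtual value is well-defined). Under these assumptions the computation is immediate. Conceptually, the lemma is just the statement that the virtual value transforms covariantly under a monotone reparametrization of values into bids, and its usefulness is that it reduces the analysis of the bid distribution's virtual value — the object that drives the Myerson auction in Lemma~\ref{lemma:directDerivativeMyersonAuction} — to the known primitive $F_i$ together with the freely chosen strategy $\beta_i$.
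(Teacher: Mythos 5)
Your proof is correct and is exactly the standard argument: the paper itself states this lemma without proof (deferring to the cited reference \cite{nedelec2019learning}), and the derivation there is precisely your change-of-variables computation $F_{B_i}(\beta_i(x)) = F_i(x)$, $f_{B_i}(\beta_i(x)) = f_i(x)/\beta_i'(x)$ substituted into $\psi_{F_{B_i}}(b) = b - \frac{1-F_{B_i}(b)}{f_{B_i}(b)}$. Your remarks on the regularity hypotheses (strict monotonicity, $\beta_i' > 0$, $f_i > 0$ on the support) are appropriate and complete the argument.
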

Using these directional derivatives and the relationship between the virtual value of the induced bid distribution and the bidder's strategy, we can derive what are the optimal linear strategies in the Myerson auction. 

Though we do not need symmetry of the bidders' value distribution, we start by a few examples assuming it for concreteness.  We recall that if $F$ is the cdf of $X_i$, $G(x)=F^{n-1}(x)$ in the case where we have $n$ symmetric bidders. 
\\
\textbf{Example of uniform [0,1] distributions:} 
In this case, $\vValue_i(x)=2x-1$ on [0,1] and $\vValue_i^{-1}(0)=1/2$. Also, $G(x)=x^{n-1}$. Then, using for the instance the representation of the derivative of $U_i(\alpha)$ appearing in the proof of Lemma \ref{lemma:directDerivativeMyersonAuction}, we have  
\begin{align*}
\left.\frac{\partial U_i(\alpha)}{\partial \alpha}\right|_{\alpha=1}&=
\int_{1/2}^1 \left(x-\frac{1}{2}\right)[(n-1)-x(n+1))]x^{n-2} dx  
    \\&=-\frac{1}{n2^{n+1}}(
    2^n-1)<0\;.
\end{align*}
Hence, each user has an incentive to shade their bid. We note that the derivative goes to 0 as $n\tendsto \infty$ (see also Fig.\ref{fig:uniform_payoff_myerson} right side), which can be interpreted as saying that as the number of users grows, each user has less and less incentive to shade. We can also observe on Fig.\ref{fig:uniform_payoff_myerson} (left side) that the difference between the payoff at optimal shading $\alpha^*$ and the payoff without shading -- $(U(\alpha^*) - U(0))$ -- decreases with $K$.
Indeed, when $K$ grows, the natural level of competition between the bidders makes the revenue optimization mechanisms (e.g. dynamic reserve price) less useful. Logically, being strategic against it in such case does not help much.
For very few bidders, the contrary happens. 
For $K=2$, we even observe that the optimal strategy is to bid with a shading of $\alpha = 0^+$ to force a price close to $0$ while still winning with probability $1/4$ -- when one is beating his reserve and the opponent is not beating his, with the result of almost doubling the payoff. 

With more advanced arguments and stronger assumptions on bidders' value distributions, we can derive what is the optimal best response for a strategic bidder in a large class of value distributions called the Generalized pareto distribution.
\begin{definition}
\item The family of Generalized Pareto distributions, parametrized by $(\mu,\sigma,\xi)$ where $\sigma >0$ and $\xi\leq 0$, has distribution 
\begin{equation*}
F_{\mu,\xi,\sigma}(x)   = \begin{cases} 1-(1 +\frac{\xi (x-\mu)}{\sigma})^{-1/\xi} & \text{for } \xi < 0 \\
1-e^{-(x-\mu)/\sigma} & \text{for } \xi= 0\end{cases}
\end{equation*} and its virtual value is affine \citep{balseiro2020multistage}
$$\psi_{\mu,\xi,\sigma}(x) = (1-\xi)x+\xi\mu-\sigma$$
\end{definition}
 \begin{figure*}[t]
\begin{center}
	\includegraphics[width=0.7\textwidth]{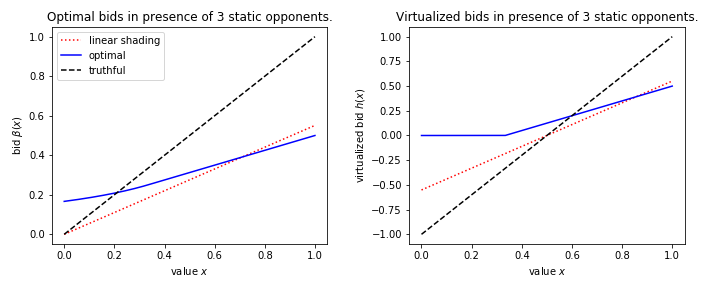}
\end{center}
	\caption{\textbf{Myerson auction: Bids and virtualized bids with one strategic bidder} There are K=4 bidders, only one of them is strategic. On the left hand side, we present a plot of the bids sent to the seller. ``Linear shading" corresponding to a bid $\shadingFunc_\alpha(x)=\alpha x$, where $x$ is the value of bidder 1; here $\alpha$ is chosen numerically to maximize that buyer's payoff. ``Optimal" corresponds to the strategy described in Theorem 1, with $\eps=0^+$. On the right hand side (RHS), we present the virtualized bids, i.e. the value taken by the associated virtual value functions evaluated at the bids sent to the seller.}\label{fig:uniform_bid_profiles}	
\end{figure*}
\begin{theorem}\label{lemma:OneStrategicMyersonInGP}
Suppose a strategic bidder faces $K-1$ opponents sharing the same distribution $F_Y$ in the Generalized Pareto family. Then, assuming that the seller is welfare benevolent, her optimal shading function is such that her virtualized bid $h_{\text{optimal}}(x)$ satisfies 
$$
h_{\text{optimal}}(x)=\max(0,\psi_Y(\beta^I_{Y,K}(\psi_Y^{-1}(x))))\;.
$$
where $\beta^I_{Y,K}$ is the first price bid of a bidder facing competition with cdf $G=F_Y^{K-1}$.
The corresponding shading function $\beta$ can be easily obtained by an application of Lemmas \ref{lemma:directDerivativeMyersonAuction} and \ref{definition_psi}.
\end{theorem}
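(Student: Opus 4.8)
The plan is to optimize the bidder's utility directly over the \emph{virtualized bid} $h=h_{\beta}$ rather than over $\beta$ itself. By Lemma~\ref{definition_psi} the map $\beta\mapsto h_\beta$ is an invertible (ODE) correspondence, so essentially any non-decreasing $h$ with $h(x_\beta)=0$ is realized by some admissible $\beta$; hence it suffices to find the optimal $h$ and then recover $\beta$ by solving the ODE of Lemma~\ref{definition_psi}. Since $U_i$ depends on $\beta$ only through $h_\beta$, I would invoke the directional-derivative representation of Lemma~\ref{lemma:directDerivativeMyersonAuction} and impose first-order stationarity: $\left.\frac{\partial}{\partial t}U(\beta_t)\right|_{t=0}=0$ for every admissible direction $\rho$.

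First I would exploit that the two contributions in Lemma~\ref{lemma:directDerivativeMyersonAuction} must vanish independently, because $h_\rho$ can be prescribed freely both on $\{x>x_\beta\}$ and at the reserve point $x_\beta$. Vanishing of the integral term for all $h_\rho$ forces the bracket to be zero pointwise, i.e. the stationarity condition
\begin{equation*}
(x-h(x))\,f_Z(h(x)) = F_Z(h(x)), \qquad x>x_\beta .
\end{equation*}
Vanishing of the boundary term, whose coefficient $\prod_{i=2}^K F_{V_i}(0)\,f_1(x_\beta)\,x_\beta$ is strictly positive unless $x_\beta=0$, pins the reserve value to the lower boundary; this is where the welfare-benevolent hypothesis enters, guaranteeing that pushing the reserve down is feasible and optimal. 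The net effect is that $h$ is clamped at $0$ whenever the interior condition would return a negative value, which produces the outer $\max(0,\cdot)$ in the statement.

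Next I would insert the Generalized Pareto structure. Because $\psi_Y$ is increasing, $Z=\max(0,\psi_Y(Y_{(1)}))$ with $Y_{(1)}$ distributed as $G=F_Y^{K-1}$, so for $z>0$ one has $F_Z(z)=G(\psi_Y^{-1}(z))$ and $f_Z(z)=g(\psi_Y^{-1}(z))/\psi_Y'(\psi_Y^{-1}(z))$. Substituting this into the stationarity condition and changing variables through $u=\psi_Y^{-1}(h)$ and $w=\psi_Y^{-1}(x)$, the crucial simplification is that $\psi_Y$ is \emph{affine}: the slope $\psi_Y'=1-\xi$ is constant and $\psi_Y(w)-\psi_Y(u)=(1-\xi)(w-u)$. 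The factor $1-\xi$ cancels on both sides and the condition collapses to
\begin{equation*}
(w-u)\,g(u) = G(u).
\end{equation*}

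Finally I would recognize $(w-u)\,g(u)=G(u)$ as exactly the first-order condition of $\max_b (w-b)\,G(b)$, i.e. the problem defining the first-price bid $u=\beta^I_{Y,K}(w)$ of a bidder with value $w$ facing a highest competitor whose cdf is $G=F_Y^{K-1}$. Undoing the substitution gives $h(x)=\psi_Y(\beta^I_{Y,K}(\psi_Y^{-1}(x)))$, and re-imposing the reserve clamping yields the claimed $h_{\text{optimal}}(x)=\max(0,\psi_Y(\beta^I_{Y,K}(\psi_Y^{-1}(x))))$; the shading function $\beta$ then follows from Lemma~\ref{definition_psi}. I expect the delicate steps to be (i) the rigorous treatment of the boundary term and the reserve, where the welfare-benevolent hypothesis and the competitor mass $\prod_{i=2}^K F_{V_i}(0)$ play a role, and (ii) verifying that the affineness of the virtual value is precisely what makes the Myerson stationarity condition coincide with the first-price first-order condition -- outside the Generalized Pareto family the leftover curvature of $\psi_Y$ would destroy this identification. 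A secondary point to check is that the stationary $h$ is indeed monotone (hence admissible) and a genuine maximizer rather than a saddle.
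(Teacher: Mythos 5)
Your proposal is correct in substance and lands on exactly the paper's key identification, but it reaches it by a slightly different route. The paper does not pass through the stationarity of the directional derivative of Lemma~\ref{lemma:directDerivativeMyersonAuction}: it writes the utility directly as $\int_{x:h(x)\geq 0}(x-h(x))\left[F_Y\left(h(x)/c_{\psi_Y}+r_Y^*\right)\right]^{K-1}f_1(x)\,dx$ and maximizes the integrand \emph{point by point} over $t=h(x)\geq 0$, whereas you derive the same pointwise condition $(x-h)f_Z(h)=F_Z(h)$ as an Euler--Lagrange equation and handle the constraint $h\geq 0$ by a KKT-style clamping plus a separate boundary-term argument at $x_\beta$. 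Both reductions then exploit affineness of $\psi_Y$ identically: your change of variables $(w-u)g(u)=G(u)$ is literally the paper's rewriting of the bracket as $\psi_Y^{-1}(x)-H(\psi_Y^{-1}(t))$ with $H=\mathrm{id}+G_{Y,K-1}/g_{Y,K-1}$, and the recognition of the first-price first-order condition ($H^{-1}=\beta^I_{Y,K}$) is the same observation. The pointwise-maximization framing buys something concrete that your version defers: the paper proves that $H$ is increasing in GP families by computing $H'=1+(1-\frac{\xi}{\sigma}G/(1-G))/(K-1)>0$ (this uses $\xi\leq 0$, i.e., genuinely more than affineness of $\psi_Y$), hence $\delta(t)$ is decreasing, the first-order condition has at most one positive root which is the \emph{global} pointwise maximizer, and the case $\psi_Y^{-1}(x)<H(\psi_Y^{-1}(0))$ cleanly yields $t^*=0$, i.e., the $\max(0,\cdot)$ clamp.

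The one genuine deficiency in your write-up is precisely the item you label a ``secondary point to check'': without the monotonicity of $H$, stationarity alone does not justify your clamping rule (there could be several stationary points, or an interior stationary point that is not the maximum, in which case ``clamp whenever the interior condition returns a negative value'' is unwarranted), nor does it distinguish a maximizer from a saddle. So that verification is not secondary --- it is the load-bearing step of the paper's proof, and you should carry out the $H'$ computation explicitly. Likewise, monotonicity of $h_{\text{optimal}}$ (needed to dodge ironing and to invert back to $\beta$ via Lemma~\ref{definition_psi}) follows in the paper for free, since $h_{\text{optimal}}$ is a composition of non-decreasing maps; state this rather than leaving it open. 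Your treatment of the boundary term and the welfare-benevolence hypothesis (pushing the effective reserve to the bottom of the support, making $h=0$ a winning bid against below-reserve competition) is more explicit than the paper's, which absorbs this into the domain $\{x:h(x)\geq 0\}$ and the atom of $F_Z$ at $0$; that part of your argument is compatible with, and arguably complements, the published proof.
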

\begin{proof}
We need to find $h$ that maximizes
$$
\int_{x:h(x)\geq 0}(x-h(x))\bigg[F_Y\left(\frac{h(x)}{c_{\psi_Y}}+r_Y^*\right)\bigg]^{K-1} f_1(x) dx\;.
$$
We can maximize point by point and hence we are looking for $t^*(x)$ such that
$$
t^*(x)=argmax_t(x-t)F_Y\left(\frac{t}{c_{\psi_Y}}+r_Y^*\right)^{K-1}\;, t>0\;.
$$ 
Differentiating the above expression gives 
$$
\delta(t)=f_Y(t/c_{\psi_Y}+r)F_Y^{K-2}(t/c_{\psi_Y}+r^*_Y)\left[\frac{x-t}{c_{\psi_Y}}-\frac{1}{K-1}
\frac{F_Y}{f_Y}(t/c_{\psi_Y}+r^*_Y)\right]\;.
$$
The expression in the bracket can be written $\psi_Y^{-1}(x)-H(\psi_Y^{-1}(t))$ where $H=\textrm{id}+\frac{G_{Y,K-1}}{g_{Y,K-1}}$, where $G_{Y,K-1}$ is the cdf of the max of $K-1$ i.i.d random variables and $g_{Y,K-1}$ its derivative. Elementary computations show that this function is increasing in GP families. In fact its derivative can be shown to be $1+(1-\frac{\xi}{\sigma}G_{}(x)/(1-G(x)))/(K-1)$ and $\xi<0$. Hence $H(\psi_Y^{-1}(t))$ is also increasing. Hence $\delta(t)$ is a decreasing function of $t$. It is also trivially continuous in GP families. We conclude that the equation $\delta(t)=0$ has at most 1 positive root. 

If $\psi_Y^{-1}(x)<H(\psi^{-1}_Y(0))$, we see that $\delta(t)<0$ for $t\geq 0$, in which case $t^*=0$. 
If that is not the case, then $\psi^{-1}_Y(t^*)=H^{-1}(\psi_Y^{-1}(x))$. Hence we have shown that 
$t^*=\max(0,\psi_Y(H^{-1}\psi_Y^{-1}(x)))$. Now we notice that the $H^{-1}(x)$ is nothing but the first price bid of a bidder facing competition with cdf $G=F_Y^{K-1}$, a bid function we denote by $\beta^{I}_{Y,K}$. So we conclude that 
$$
h_{\text{optimal}}(x)=\max(0,\psi_Y(\beta^I_{Y,K}(\psi_Y^{-1}(x))))\;.
$$
Once again the fact that $h_{\text{optimal}}$ is non-decreasing (as a composition of non-decreasing functions) avoids issues related to ironing.
\end{proof} 
The strategies are shown in Figure \ref{fig:uniform_bid_profiles}. This solves the bidder/seller Stackelberg game when the seller is using the Myerson auction on the bid distrbution observed during her learning stage.

 \section{Gradient-based Stackelberg games between bidders and seller}
To extend the idealized setting to more realistic assumptions, we now assume that instead of computing directly the Myerson auction, the seller is a using a gradient-based learning mechanism. We consider the approach taken by \cite{dutting2017optimal} for the implementation of the seller's optimization process. Their work provides a general algorithmic approach to approximately solve this problem in multi-item, multi-bidder settings. The seller's auction is parametrized by a weight vector $\omega$ corresponding to two neural networks which take bids for each item  and each player ($n\times m$ entries) as inputs and return respectively the allocation probability $a_{\omega}$ of each item, for each player ($n\times m$ outputs) and the payment for each player $p_{\omega}$ ($n$ outputs). In the case of combinatorial auctions, bidders would submit a bid for each possible bundle ($n\times 2^m$ entries).

For the single-item setting, we consider the MyersonNet architecture \cite{dutting2017optimal}. The allocation rule is defined as an invertible neural network parametrizing a transformation of the bid. The payment rule is obtained in such a way that the auction is DSIC following the Myerson lemma. This provides a first benchmark on how seller learning algorithms are sensitive to adversarial attacks. We focus on one specific bidder and assume that the strategies of other bidders are fixed.  We show how the strategic bidder can optimize an exploration bidding policy to increase his utility when the seller is using a MyersonNet-type architecture to optimize her selling mechanism. 

\begin{definition}[Exploration bidding policy]
We consider a set of possible bidding strategies $\mathcal{B}$. An exploration bidding policy $U$ is a distribution over this set of strategies.
\end{definition}

We first consider the case where $\mathcal{B}$ is the set of linear bidding strategies because of their simplicity and wide use in modern industrial bidding engines. To parametrize our exploration bidding policy, we use a normal distribution such that 
$$\lambda \sim \mathcal{N}(\mu,\sigma^2) = U(\mu,\sigma^{2})$$
with corresponding bidding strategy $\beta_\lambda(x) = \lambda x$ for the strategic bidder. We do not require any assumption on the other bidders' behavior. 

According to the exploration policy, we sample several shading parameters $\lambda$ which are used as bid multipliers by the strategic bidder. The goal of the strategic bidder is to optimize the parameters $\mu$ and $\sigma^2$ to maximize his utility when the seller is using the MyersonNet architecture. A representation of the global architecture is provided in Figure \ref{global_architecture}.
\\
In \cite{nedelec2018thresholding}, they introduced a class of functions which are optimal in several types of revenue-maximizing auctions. We also consider this class of strategies in our experiments in a second time. The thresholded strategies they introduced can be parametrized by three parameters~: the threshold $r$ corresponding to the value below which the virtual value is thresholded; the slope $a$ of the bidding distributions' virtual value after the threshold $r$; and the value $\epsilon$ of the virtual value before $r$: in the case of a uniform value distribution, this gives a bidding strategy parametrized such that the virtual value of the bid distribution satisfies $\psi_{B}(x) = \epsilon$ for $x<r$ and $\psi_{B}(x) = ax - r$ for $x\geq r$:
 We maintain a normal distribution with diagonal covariance $\Sigma$  over $\Lambda=(r,a,\epsilon)$ and optimize the exploration bidding policy corresponding to this class of strategies. We show that this results in a large increase in terms of utility for the strategic bidder, without needing to know the exact optimization procedure of the seller. In practice, we would only need to detect when the learning stage of the seller is finished, which could be adressed in another following work.

\begin{figure}[t!]
\begin{tikzpicture}[
    block/.style={rectangle, draw, text centered, rounded corners,     minimum
    height=4em,text width=5.5em}
]
	\filldraw[fill=red!20]  (3.2,-0.3) rectangle (5.6,2.8); 
	\filldraw[fill=green!20] (0.4,-0.8) rectangle (2.3,0.8); 

	\node[circle, draw, thick] (z) {$\vec{x}_i$};
	\node[above=of z, circle, draw, thick] (xinit) {$\vec{x}_j$};
	
	\node[above=of xinit, circle, draw, thick] (xinit2) {$\vec{x}_k$};

	\node[rectangle, draw, thick, right=0.5em of z, text width=1.5cm] (bidderbox) {strategic bidder network} ;
	\node[rectangle, draw, thick, right=0.5em of bidderbox] (x) {$\vec{b}_i$};

	\node[above=of x, circle, draw, thick] (xt) {$\vec{b}_j$};
	\draw[-stealth, thick] (xinit) -- node[above] {fixed strategy} (xt);
	\node[above=of xt, circle, draw, thick] (xt2) {$\vec{b}_k$};
	\draw[-stealth, thick] (xinit2) -- node[above] {fixed strategy} (xt2);

	\draw[-stealth, thick] (z) -- (bidderbox);

	\draw[-stealth, thick] (bidderbox) -- (x);

	\node[rectangle, draw, thick, right=2em of x, yshift=6.5em, text width=1.5cm] (D) {allocation network} ;
	\node[rectangle, draw, thick, below=2.5em of D,text width=1.5cm] (paymentnet) {payment network};

	\node[right=3em of D] (out) {revenue};
	\node[right=3em of D, yshift=-2.5em] (out2) {utility};

	\draw[-stealth, dashed] (D) --  (out);
	\draw[-stealth, dashed] (D) --  (out2);
	\draw[-stealth, dashed] (paymentnet) --  (out);
	\draw[-stealth, dashed] (paymentnet) --  (out2);

	\draw[-stealth,dashed]    (out2) to[out=-3,in=-30] (1.0,-0.7);
	\draw[-stealth,dashed]    (out) to[out=3,in=20] (5.0,3.0);

	\draw[-stealth, thick] (xt) -- node[above] {} (D);
	\draw[-stealth, thick] (xt) -- node[above] {} (paymentnet);
	\draw[-stealth, thick] (xt2) -- node[above] {} (D);
	\draw[-stealth, thick] (xt2) -- node[above] {} (paymentnet);
	\draw[-stealth, thick] (x) -- node[above] {} (D);
	\draw[-stealth, thick] (x) -- node[above] {} (paymentnet);

			
			
			

\end{tikzpicture}
\caption{\textbf{General architecture for adversarial learning in revenue-maximizing auctions.}. The green box corresponds to the strategic bidder's parameters, the red box corresponds to the parameters that the seller can optimize. The figure represents three bidders with one strategic bidder optimizing his bidding strategy with assuming the other bidders are using some fixed strategies.} 
\label{global_architecture}
\end{figure}
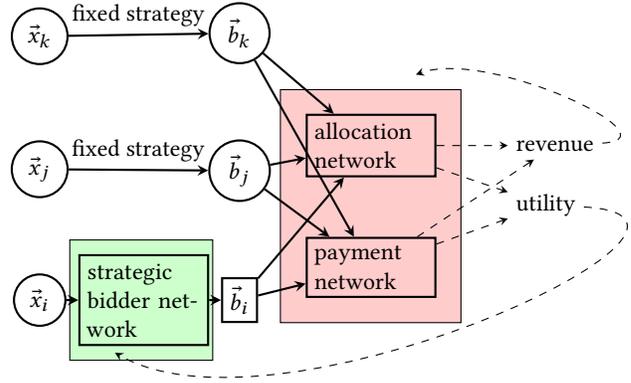

\begin{table*}[t]
\centering
\small

\begin{tabular}{|c|c|c|c|c|c|c|c|c|c|c|}
\hline
\multicolumn{1}{|c|}{\multirow{3}{*}{Setup}} & \multicolumn{2}{c|}{\multirow{2}{*}{Truthful VCG}} & \multicolumn{2}{c|}{\multirow{2}{0.075\textwidth}{Truthful Myerson}} & \multicolumn{6}{c|}{\textit{MyersonNet}}                                                                            \\ \cline{6-11} 
\multicolumn{1}{|c|}{}                       & \multicolumn{2}{c|}{}           &\multicolumn{2}{c|}{}                                                                   & \multicolumn{2}{c|}{Truthful}       & \multicolumn{2}{c|}{Linear} & \multicolumn{2}{c|}{Thresholded} \\ \cline{2-11} 
                    & \textit{utility}       & \textit{revenue}        & \textit{utility}       & \textit{revenue}      & \textit{utility}       & \textit{revenue} & \textit{utility} & \textit{revenue} & \textit{utility}   & \textit{revenue}   \\ \hline
K=2                &0.168&          0.33       & 0.083                   &     0.416                                                                                     & 0.083           & 0.417                 & 0.169            &      0.304            & 0.181             &   0.368                 \\ \hline
K=3               & 0.083&          0.500       & 0.057              &     0.531                                                                                    & 0.057             &   0.530               & 0.100             &      0.46            & 0.115              &     0.495               \\ \hline
K=4              & 0.05&       0.60            &          0.040              &         0.612                                                                            &     0.040             & 0.612                 &     0.064             &        0.570          &   0.069                 &    0.587                \\ \hline
\end{tabular}
\caption{\textbf{Experiments for the single-item setting. All bidders have a uniform value distribution on $[0,1]$. The strategic bidder is playing against $K-1$ bidders who are bidding truthfully. The seller is using the MyersonNet architecture as learning algorithm.} To compute the performance of the exploration policy, we average over $q=50$ strategies sampled from the exploration policy.}
\label{table_single_item}
\end{table*}

We use the classical Reinforce algorithm \cite{williams1992simple} to optimize the parameters $\Lambda$ of the distribution. In the experiments, we do not optimize the variance of the distribution (hence it never tends to 0) to continue the exploration. In practice, it makes the approach robust to any change in the seller's optimization procedure as the bidder never stops exploring. If  the goal were solely to find the optimal bidding strategies by enabling the variance to converge to zero, we could use classical evolutionary search algorithms such as NES \cite{wierstra2008natural} to also optimize the variance of the distributions. The full procedure is presented in Algorithm \ref{Algo:adv_train}. All our implementations are provided in Pytorch.

\begin{algorithm}
\caption{Adversarial training for sellers' learning mechanisms}
\label{Algo:adv_train}
  \SetAlgoLined
  \SetKwInOut{Input}{Input}\SetKwInOut{Init}{Initialization}
 \Input{Distributions $F_1,\dots,F_n$, seller's learning mechanism $\mathcal{A}$}
 \Init{Initialize $\mu_1,\;\Sigma$\;}
  \For{$t = 1$ to $T$} {
  \For{$l = 1$ to $L$}
   {  Sample $\Lambda_{i,l}\sim\mathcal{N}(\mu_t,\Sigma)$\\

   Run subroutine $\mathcal{A}$ to optimize seller's mechanism on bids induced by $\beta_{\Lambda_{i,l}}$ for the strategic bidder $i$ and $\beta_j=\beta_{Id}$ for other bidders \\
   Compute strategic bidder's utility: $U_{\Lambda_{i,l}} = U(\mathcal{A}(F_{B_{\Lambda_{i,l}}},F_{-i}),\beta_{\Lambda_{i,l}})$\\}
   Compute gradient: $\nabla U(\mu_t) = \frac{1}{L} \sum_{l=1}^L  U_{\Lambda_{i,l}}\log\left(f_{\mathcal{N}(\mu_t,\Sigma)}(\Lambda_{i,l})\right)$
   Update: $\mu_{t+1}\leftarrow \mu_t-\rho_t \nabla U(\mu_t)$}
\end{algorithm}
 
The optimization procedure is the following. The goal of the algorithm is to maximize the expected utility of the strategic bidder~: 
$$argmax_{\mu \in \mathbb{R}} U(\mu) = \mathbb{E}_{\Lambda\sim U(\mu,\Sigma)}\bigg(U(m(\beta_\Lambda),\beta_{\Lambda})\bigg)$$
where $U$ is the strategic bidder's utility and $m(\beta_\Lambda)$ is the mechanism resulting from a training where the neural networks take bids  $b_i = \beta_\Lambda(v_i)$ as inputs, and where $v_i$ is sampled from  $F_i$. We use the MyersonNet architecture for the single-item case and the RegretNet architecture for the multi-item setting.  In both cases, they takes bids induced by the shading strategy as inputs. We sample several shading parameters according to the exploration bidding policy and take a gradient step according to: 
$$
\nabla_\mu U(\mu)= \textbf{E}_{\Lambda \sim U(\mu,\Sigma)}\left(U(\beta_{\Lambda})\frac{\nabla_\mu p_{\mu,\Sigma}(\Lambda)}{p_{\mu,\Sigma}(\Lambda)}\right)\;.
$$
with $p_{\mu,\Sigma}$ the probability density function (henceforth pdf) corresponding to $U(\mu,\Sigma)$. To compute $U(\beta_\Lambda)$ we run a full training of the MyersonNet architecture.

\subsection{Extension to various multi-item settings}
Our approach can easily be extended to the multi-item setting where no theoretical solution to the bidder/seller Stackelberg game is known. It offers a first benchmark on how bidders can manipulate such auctions. To study multi-item settings, we use the RegretNet architecture \cite{dutting2017optimal} to parametrize the sellers' mechanism instead of the MyersonNet, used in the single item case. RegretNet uses of two feed-forward deep neural networks to parametrize respectively the allocation and payment rules. Both networks contain $H$ hidden layers of size $h$ activated with $\tanh$. We limit our study to additive bidders, i.e. bidders whose valuation for a bundle $S$ is $v(S)=\underset{k\in S}{\sum}v(k)$. To enforce that each item is allocated at most once, we use a softmax function on the output layer (size $n\times m$) so that $\forall i \in M,\;\underset{k\in \{ 1, n \}}{\sum} a_{k,i} \leq 1$, where $a_{k,i}$ is the probability for bidder $k$ to get object $i$ outputed by the allocation network. The output layer of the payment network is activated using the sigmoid function, and outputs $n$ coefficients $\bar{p_k}\leq 1$ which combined with the allocation network output return the actual payment for each bidder: $\forall k\in\{ 1,n\},\;p_k = \bar{p_k}\underset{l\in M}{\sum} a_{k,l}\vec{b_{k,l}}$. The condition $\bar{p_k}\leq 1$ ensures that expected payment never exceeds expected gains for the seller (\emph{indidual rationality} condition).

We benchmark the impact of a linear exploration policy on the RegretNet architecture and see how the seller's revenue is impacted by a strategic bidder in the multi-item setting. We consider two classical settings of the multi-item literature. We denote by Setting I the setting with two items and two bidders with additive valuations and uniform value distribution $F_1=F_2=U([0,1]^2)$; and by Setting II the setup with two objects and three bidders with additive valuations and value distribution $F_1=F_2=F_3=U([0,1]^2)$. 

\subsection{Handling the exploration stage of the seller}
The tradeoff between exploration and exploitation from the seller standpoint was introduced in \cite{amin2013learning} and refined in \cite{mohri2015revenue, golrezaei2018dynamic}. They introduce a parameter $\alpha$, $0\leq \alpha<1$, to define this trade-off, assuming the ratio of length between the first and the second stage is equal to $\alpha/(1-\alpha)$. In \cite{amin2013learning}, they show that if bidders are non-discounted buyers, there must exist a good strategy for them in this mechanism, forcing the seller to suffer a regret linear in the number of auctions. We can derive such strategies by adding bidder's utility in the first stage where the bidder is using his strategy in a non-optimized auction such as a second price auction without reserve price. To consider the cost of using a certain strategy in the first stage of the game, we can add it to the objective function and optimize: 
 $$U_\alpha =  \alpha U_{\text{second price without reserve}}(\beta) + (1-\alpha) U_{\text{MyersonNet}}(\beta)$$
Again, we assume that bidders are the leaders in this framework since they know the mechanism used by the seller, the length of the exploration stage and can choose their strategy accordingly.

\section{Experimental results}

\begin{table*}[t]
\small
\centering
\begin{tabular}{|l|l|l|l|l|l|l|}
\hline
\multicolumn{1}{|c|}{\multirow{2}{0.2\textwidth}{Setting}} & \multicolumn{2}{c|}{VCG} & \multicolumn{2}{c|}{RegretNet (truthful)} & \multicolumn{2}{c|}{RegretNet (adversarial)} \\ \cline{2-7} 
\multicolumn{1}{|c|}{} & \multicolumn{1}{c|}{\textit{utility}} & \multicolumn{1}{c|}{\textit{revenue}} & \multicolumn{1}{c|}{\textit{utility}} & \multicolumn{1}{c|}{\textit{revenue}} & \multicolumn{1}{c|}{\textit{utility}} & \multicolumn{1}{c|}{\textit{revenue}} \\ \hline
Setting Ia : &  &  &  &  & &  \\
two bidders, two objects & 0.336 & 0.666 & 0.149 & 0.882 & 0.306 (+108\%) & 0.696 (-21\%) \\ 
uniform value distribution &  &  &  &  &  &  \\
\hline
Setting Ib : &  &  &  &  & &  \\
two bidders, two objects & 1.000 & 1.000 & 0.504 & 1.481 & 0.574 (+13\%) & 1.443 (-2.5\%) \\ 
exponential
 value distribution &  &  &  &  &  &  \\
\hline
Setting IIa : &  &  &  &  &  &  \\ 
three bidders, two objects & 0.166 & 1.000  & 0.096 & 1.034 & 0.148 (+54\%) & 0.985  (-4.7\%) \\ 
uniform value distribution &  &  &  &  &  &  \\ \hline
Setting IIb : &  &  &  &  &  &  \\ 
three bidders, two objects & 0.666 & 1.666  & 0.249 & 1.804 & 0.294 (+18\%) & 1.801 (-0.1\%) \\ 
exponential value distribution &  &  &  &  &  &  \\ \hline
\end{tabular}
\caption{\textbf{Experiments for the multi-item setting. The strategic bidder is using a linear bidding exploration policy with parameter $\sigma^2_k = 0.05$. The seller is using the RegretNet architecture as selling mechanism.} We run $T=150$ adversarial training epochs,and base our evaluation on averaging over $q=12$ strategies from the exploration policy.}
\label{table_regretnet}
\end{table*}
Our pipeline of experiments provides a first benchmark on the impact of adversarial attacks of well-known seller's learning mechanisms. We consider the uniform distribution on $[0,1]$ since this is the standard textbook example in auction design and the exponential distribution.  We use $\sigma^2=0.005$ in our experiment to learn the linear shading and $\Sigma=\text{diag}(0.005,0.005,0.005)$ to learn the parameters of the thresholded exploration policy. To compute strategic bidder's utility and seller's revenue, we sample bidding strategy parameters according to the exploration bidding policy.  Our result are reported on Table \ref{table_single_item}. For the setting with three bidders, we get an uplift of 20\% in terms of utility for the strategic bidder and a decrease of 9\% in seller's revenue with a simple linear shading policy. It shows as expected that the MyersonNet architecture is not robust to adversarial attacks from a strategic bidder. Interestingly, with the thresholded strategies in the case of two bidders, the exploration bidding policy leads to both a higher utility for the strategic bidder \emph{and} a higher revenue for the seller than when using linear shading. This is the illustration that the auction game is not a zero-sum game between the seller and the buyers.

We compare the performance of our approach with several natural baselines. The Vickrey-Clark-Gloves (VCG) auction corresponds to the second-price auction without reserve price. This is a welfare-maximizing auction. Possibly surprisingly, it is possible to get a higher utility for a strategic bidder when seller is using a revenue-maximizing auction rather than a welfare-maximizing auction. Indeed, Myerson reduces the competition when all the other bidders are bidding below their reserve price. The strategic bidder takes advantage of this reduction of competition to increase his utility. We only provide experiments for less than 4 bidders since the interest of revenue-maximizing auctions both in terms of utility and revenue decreases dramatically with the number of bidders when they all have symmetric value distributions. Our architecture could also enable to study the impact of other strategic buyers by running at the same time several buyers' learning algorithms.
\subsection{Experiments with multi-item auctions}
Using simple linear shadings in multi-item settings yielded considerable improvements in bidders' utility. We implemented Algorithm \ref{Algo:adv_train} initializing $\mu_1$ to be an array of $m$ ones (corresponding to the thuthful strategy), and $\sigma^2_k = 0.05$ for all $k\in M$. We run $T=150$ adversarial training epochs, and sample $q=12$ lambdas per epoch. We optimize the seller mechanism every $3$ adversarial epoch by training the RegretNet architecture. We implement the RegretNet architecture in PyTorch by using two neural networks with $H=2$ hidden layers of size $h=30$. Our experimental results are reported in Table \ref{table_regretnet}. We observe substantial improvements in bidders' utility, with a 108\% uplift for Setting Ia and a 54\% uplift for Setting IIa. This is the performance of the exploration and it would be possible to improve the strategic bidder's utility by decreasing the variance of the exploration policy at the cost of not being robust to changes of the learning mechanism. This suggests that even better improvements in utility could be found using more complex bidding strategies in the spirit of the thresholded-virtual-value strategy introduced by \cite{nedelec2018thresholding} for the single-item framework. 

Our work thus opens the door to several natural extensions such as using neural networks to parametrize more complex bidding strategies, or studying other bidder types, valuation distributions and auctions such as the combinatorial auction. However, training neural networks to learn the exploration policy would increase the running time of the procedure, which is already substantial for linear shading strategies. This provides a first benchmark to design adversarial attacks against sellers' learning algorithms. This benchmark could be extended in the near future by testing new seller algorithms and new architecture to learn strategic behaviors. This reinforces the idea that the conceptual mistake of not treating the game where the seller uses past bids to optimize the auction as a Stackelberg game can be very costly for bidders. Moreover, they show that data-driven automatic mechanisms are vulnerable to adversarial attacks, hence providing motivation for practical implementation of adversarial attacks on modern marketplaces, or implementation of automatized mechanisms robust to adversarial attacks on these same platforms.

\section{A need for adversarially-robust seller learning mechanisms}
A natural extension to the design of adversarial attacks against data-driven automated selling mechanisms is the design of learning algorithms which are robust to adversarial attacks. This line of work has been initiated by \cite{allouah2018prior}, who find mechanisms which maximize the seller's revenue against the worst bid distribution in a certain class. To avoid dealing with worst-case scenarii, an intermediate approach would be to consider mechanisms robust to a class of bidding strategies and a class of initial value distributions. 
\begin{definition}[$\epsilon$ adversarially-robust learning algorithm]
A selling learning algorithm $\mathcal{M}$ is said to be $\epsilon$ adversarially-robust for this class of value distributions, if for any value distributions $F_i$ in this class, for any adversarial attack $\beta^*$, with ,$\beta^{Tr}$ the truthful strategy, the seller's revenue $R$ when the strategic bidder is using U verifies
$
R(\mathcal{M}(F_i,\beta^*),\beta^*) \geq R(\mathcal{M}(F_i,\beta^{Tr}),,\beta^{Tr}) - \epsilon \;.
$
\end{definition}
This leads to a new definition of incentive compatible learning algorithms where bidders have an incentive to bid truthfully even if the seller is using past bids to optimize her mechanism. A follow up on our work could be to investigate feasibility of such robust mechanisms by adding a constraint to an augmented Lagrangian method similar to that used by \cite{dutting2017optimal}. Our approach is the first necessary step in  the design of such robust mechanisms since it computes how the revenue is impacted when using a given learning mechanism. 
\section{Conclusion} 
We present a new way to design adversarial attacks against cutting-edge automatic mechanism design algorithms. Our approach yields very substantial utility gains for the strategic bidder in our numerical experiments. This allows buyers to quantify the price of revealing information about their values in repeated auctions. From a theoretical standpoint, this offers a new tool to study economics interactions through an algorithmic lens and represents a new step to reinterpret economics problems as algorithmic learning problems between strategic agents.


\clearpage
\bibliographystyle{ACM-Reference-Format} 
\balance
\bibliography{bidderDependentAuctionsAll}


\begin{thebibliography}{39}


\ifx \showCODEN    \undefined \def \showCODEN     #1{\unskip}     \fi
\ifx \showDOI      \undefined \def \showDOI       #1{#1}\fi
\ifx \showISBNx    \undefined \def \showISBNx     #1{\unskip}     \fi
\ifx \showISBNxiii \undefined \def \showISBNxiii  #1{\unskip}     \fi
\ifx \showISSN     \undefined \def \showISSN      #1{\unskip}     \fi
\ifx \showLCCN     \undefined \def \showLCCN      #1{\unskip}     \fi
\ifx \shownote     \undefined \def \shownote      #1{#1}          \fi
\ifx \showarticletitle \undefined \def \showarticletitle #1{#1}   \fi
\ifx \showURL      \undefined \def \showURL       {\relax}        \fi
\providecommand\bibfield[2]{#2}
\providecommand\bibinfo[2]{#2}
\providecommand\natexlab[1]{#1}
\providecommand\showeprint[2][]{arXiv:#2}

\bibitem[\protect\citeauthoryear{Albert, Conitzer, and Stone}{Albert
  et~al\mbox{.}}{2017}]%
        {albert2017automated}
\bibfield{author}{\bibinfo{person}{Michael Albert}, \bibinfo{person}{Vincent
  Conitzer}, {and} \bibinfo{person}{Peter Stone}.}
  \bibinfo{year}{2017}\natexlab{}.
\newblock \showarticletitle{Automated design of robust mechanisms}. In
  \bibinfo{booktitle}{\emph{Proceedings of AAAI}}.
\newblock


\bibitem[\protect\citeauthoryear{Allouah and Besbes}{Allouah and
  Besbes}{2018}]%
        {allouah2018prior}
\bibfield{author}{\bibinfo{person}{Amine Allouah} {and} \bibinfo{person}{Omar
  Besbes}.} \bibinfo{year}{2018}\natexlab{}.
\newblock \showarticletitle{Prior-Independent Optimal Auctions}. In
  \bibinfo{booktitle}{\emph{Proceedings of EC}}.
\newblock


\bibitem[\protect\citeauthoryear{Amin, Rostamizadeh, and Syed}{Amin
  et~al\mbox{.}}{[n.d.]}]%
        {amin2013learning}
\bibfield{author}{\bibinfo{person}{Kareem Amin}, \bibinfo{person}{Afshin
  Rostamizadeh}, {and} \bibinfo{person}{Umar Syed}.}
  \bibinfo{year}{[n.d.]}\natexlab{}.
\newblock \showarticletitle{Learning prices for repeated auctions with
  strategic buyers}, In \bibinfo{booktitle}{2013}.
\newblock \bibinfo{journal}{\emph{Proceedings of NIPS}}.
\newblock


\bibitem[\protect\citeauthoryear{Amin, Rostamizadeh, and Syed}{Amin
  et~al\mbox{.}}{2014}]%
        {amin2014repeated}
\bibfield{author}{\bibinfo{person}{Kareem Amin}, \bibinfo{person}{Afshin
  Rostamizadeh}, {and} \bibinfo{person}{Umar Syed}.}
  \bibinfo{year}{2014}\natexlab{}.
\newblock \showarticletitle{Repeated contextual auctions with strategic
  buyers}. In \bibinfo{booktitle}{\emph{Proceedinga of NIPS}}.
\newblock


\bibitem[\protect\citeauthoryear{Armstrong}{Armstrong}{1996}]%
        {armstrong1996multiproduct}
\bibfield{author}{\bibinfo{person}{Mark Armstrong}.}
  \bibinfo{year}{1996}\natexlab{}.
\newblock \showarticletitle{Multiproduct nonlinear pricing}.
\newblock \bibinfo{journal}{\emph{Econometrica}} \bibinfo{volume}{64},
  \bibinfo{number}{1} (\bibinfo{year}{1996}), \bibinfo{pages}{51}.
\newblock


\bibitem[\protect\citeauthoryear{Ashlagi, Daskalakis, and Haghpanah}{Ashlagi
  et~al\mbox{.}}{2016}]%
        {ashlagi2016sequential}
\bibfield{author}{\bibinfo{person}{Itai Ashlagi}, \bibinfo{person}{Constantinos
  Daskalakis}, {and} \bibinfo{person}{Nima Haghpanah}.}
  \bibinfo{year}{2016}\natexlab{}.
\newblock \showarticletitle{Sequential mechanisms with ex-post participation
  guarantees}. In \bibinfo{booktitle}{\emph{Proceedings of EC}}.
\newblock


\bibitem[\protect\citeauthoryear{Balcan, Sandholm, and Vitercik}{Balcan
  et~al\mbox{.}}{2018}]%
        {balcan2018general}
\bibfield{author}{\bibinfo{person}{Maria-Florina Balcan},
  \bibinfo{person}{Tuomas Sandholm}, {and} \bibinfo{person}{Ellen Vitercik}.}
  \bibinfo{year}{2018}\natexlab{}.
\newblock \showarticletitle{A general theory of sample complexity for
  multi-item profit maximization}. In \bibinfo{booktitle}{\emph{Proceedings of
  EC}}.
\newblock


\bibitem[\protect\citeauthoryear{Balseiro, Candogan, and Gurkan}{Balseiro
  et~al\mbox{.}}{2020}]%
        {balseiro2020multistage}
\bibfield{author}{\bibinfo{person}{Santiago~R Balseiro}, \bibinfo{person}{Ozan
  Candogan}, {and} \bibinfo{person}{Huseyin Gurkan}.}
  \bibinfo{year}{2020}\natexlab{}.
\newblock \showarticletitle{Multistage Intermediation in Display Advertising}.
\newblock \bibinfo{journal}{\emph{Manufacturing \& Service Operations
  Management}} (\bibinfo{year}{2020}).
\newblock


\bibitem[\protect\citeauthoryear{Braverman, Mao, Schneider, and
  Weinberg}{Braverman et~al\mbox{.}}{2018}]%
        {braverman2018selling}
\bibfield{author}{\bibinfo{person}{Mark Braverman}, \bibinfo{person}{Jieming
  Mao}, \bibinfo{person}{Jon Schneider}, {and} \bibinfo{person}{Matt
  Weinberg}.} \bibinfo{year}{2018}\natexlab{}.
\newblock \showarticletitle{Selling to a no-regret buyer}. In
  \bibinfo{booktitle}{\emph{Proceedings of EC}}.
\newblock


\bibitem[\protect\citeauthoryear{Cai, Daskalakis, and Papadimitriou}{Cai
  et~al\mbox{.}}{2015}]%
        {cai2015optimum}
\bibfield{author}{\bibinfo{person}{Yang Cai}, \bibinfo{person}{Constantinos
  Daskalakis}, {and} \bibinfo{person}{Christos Papadimitriou}.}
  \bibinfo{year}{2015}\natexlab{}.
\newblock \showarticletitle{Optimum statistical estimation with strategic data
  sources}. In \bibinfo{booktitle}{\emph{Proceedings of COLT}}.
\newblock


\bibitem[\protect\citeauthoryear{Cole and Roughgarden}{Cole and
  Roughgarden}{2014}]%
        {cole2014sample}
\bibfield{author}{\bibinfo{person}{Richard Cole} {and} \bibinfo{person}{Tim
  Roughgarden}.} \bibinfo{year}{2014}\natexlab{}.
\newblock \showarticletitle{The sample complexity of revenue maximization}. In
  \bibinfo{booktitle}{\emph{Proceedings of Theory of computing}}.
\newblock


\bibitem[\protect\citeauthoryear{Conitzer and Sandholm}{Conitzer and
  Sandholm}{2002}]%
        {conitzer2002complexity}
\bibfield{author}{\bibinfo{person}{Vincent Conitzer} {and}
  \bibinfo{person}{Tuomas Sandholm}.} \bibinfo{year}{2002}\natexlab{}.
\newblock \showarticletitle{Complexity of mechanism design}. In
  \bibinfo{booktitle}{\emph{Proceedings of UAI}}.
\newblock


\bibitem[\protect\citeauthoryear{Conitzer and Sandholm}{Conitzer and
  Sandholm}{2006}]%
        {conitzer2006computing}
\bibfield{author}{\bibinfo{person}{Vincent Conitzer} {and}
  \bibinfo{person}{Tuomas Sandholm}.} \bibinfo{year}{2006}\natexlab{}.
\newblock \showarticletitle{Computing the optimal strategy to commit to}. In
  \bibinfo{booktitle}{\emph{Proceedings of EC}}.
\newblock


\bibitem[\protect\citeauthoryear{Daskalakis, Deckelbaum, and Tzamos}{Daskalakis
  et~al\mbox{.}}{2013}]%
        {daskalakis2013mechanism}
\bibfield{author}{\bibinfo{person}{Constantinos Daskalakis},
  \bibinfo{person}{Alan Deckelbaum}, {and} \bibinfo{person}{Christos Tzamos}.}
  \bibinfo{year}{2013}\natexlab{}.
\newblock \showarticletitle{Mechanism design via optimal transport}. In
  \bibinfo{booktitle}{\emph{Proceedings of the fourteenth ACM conference on
  Electronic commerce}}. ACM, \bibinfo{pages}{269--286}.
\newblock


\bibitem[\protect\citeauthoryear{Deng, Schneider, and Sivan}{Deng
  et~al\mbox{.}}{2019}]%
        {deng2019prior}
\bibfield{author}{\bibinfo{person}{Yuan Deng}, \bibinfo{person}{Jon Schneider},
  {and} \bibinfo{person}{Balasubramanian Sivan}.}
  \bibinfo{year}{2019}\natexlab{}.
\newblock \showarticletitle{Prior-Free Dynamic Auctions with Low Regret
  Buyers}. In \bibinfo{booktitle}{\emph{Proceedings of NeurIPS}}.
\newblock


\bibitem[\protect\citeauthoryear{Derakhshan, Golrezaei, and Leme}{Derakhshan
  et~al\mbox{.}}{2019}]%
        {derakhshan2019lp}
\bibfield{author}{\bibinfo{person}{Mahsa Derakhshan}, \bibinfo{person}{Negin
  Golrezaei}, {and} \bibinfo{person}{Renato~Paes Leme}.}
  \bibinfo{year}{2019}\natexlab{}.
\newblock \showarticletitle{LP-based Approximation for Personalized Reserve
  Prices}.
\newblock \bibinfo{journal}{\emph{Proceedings of EC}} (\bibinfo{year}{2019}).
\newblock


\bibitem[\protect\citeauthoryear{D{\"u}tting, Feng, Narasimhan, and
  Parkes}{D{\"u}tting et~al\mbox{.}}{2019}]%
        {dutting2017optimal}
\bibfield{author}{\bibinfo{person}{Paul D{\"u}tting}, \bibinfo{person}{Zhe
  Feng}, \bibinfo{person}{Harikrishna Narasimhan}, {and}
  \bibinfo{person}{David~C Parkes}.} \bibinfo{year}{2019}\natexlab{}.
\newblock \showarticletitle{Optimal auctions through deep learning}. In
  \bibinfo{booktitle}{\emph{Proceedings of ICML}}.
\newblock


\bibitem[\protect\citeauthoryear{Epasto, Mahdian, Mirrokni, and Zuo}{Epasto
  et~al\mbox{.}}{2018}]%
        {epasto2018incentive}
\bibfield{author}{\bibinfo{person}{Alessandro Epasto},
  \bibinfo{person}{Mohammad Mahdian}, \bibinfo{person}{Vahab Mirrokni}, {and}
  \bibinfo{person}{Song Zuo}.} \bibinfo{year}{2018}\natexlab{}.
\newblock \showarticletitle{Incentive-aware learning for large markets}. In
  \bibinfo{booktitle}{\emph{Proceedings of WWW}}.
\newblock


\bibitem[\protect\citeauthoryear{Golowich, Narasimhan, and Parkes}{Golowich
  et~al\mbox{.}}{2018}]%
        {golowich2018deep}
\bibfield{author}{\bibinfo{person}{Noah Golowich}, \bibinfo{person}{Harikrishna
  Narasimhan}, {and} \bibinfo{person}{David~C Parkes}.}
  \bibinfo{year}{2018}\natexlab{}.
\newblock \showarticletitle{Deep Learning for Multi-Facility Location Mechanism
  Design.}. In \bibinfo{booktitle}{\emph{Proceedings of IJCAI}}.
\newblock


\bibitem[\protect\citeauthoryear{Golrezaei, Javanmard, and Mirrokni}{Golrezaei
  et~al\mbox{.}}{2019}]%
        {golrezaei2018dynamic}
\bibfield{author}{\bibinfo{person}{Negin Golrezaei}, \bibinfo{person}{Adel
  Javanmard}, {and} \bibinfo{person}{Vahab Mirrokni}.}
  \bibinfo{year}{2019}\natexlab{}.
\newblock \showarticletitle{Dynamic incentive-aware learning: Robust pricing in
  contextual auctions}. In \bibinfo{booktitle}{\emph{Proceedings of NeurIPS}}.
\newblock


\bibitem[\protect\citeauthoryear{Huang, Mansour, and Roughgarden}{Huang
  et~al\mbox{.}}{2018}]%
        {huang2018making}
\bibfield{author}{\bibinfo{person}{Zhiyi Huang}, \bibinfo{person}{Yishay
  Mansour}, {and} \bibinfo{person}{Tim Roughgarden}.}
  \bibinfo{year}{2018}\natexlab{}.
\newblock \showarticletitle{Making the most of your samples}. In
  \bibinfo{booktitle}{\emph{SIAM Journal on Computing}}.
\newblock


\bibitem[\protect\citeauthoryear{Kanoria and Nazerzadeh}{Kanoria and
  Nazerzadeh}{2014}]%
        {kanoria2017dynamic}
\bibfield{author}{\bibinfo{person}{Yash Kanoria} {and} \bibinfo{person}{Hamid
  Nazerzadeh}.} \bibinfo{year}{2014}\natexlab{}.
\newblock \showarticletitle{Dynamic Reserve Prices for Repeated Auctions:
  Learning from Bids}. In \bibinfo{booktitle}{\emph{Proceedings of WINE}}.
\newblock


\bibitem[\protect\citeauthoryear{Manelli and Vincent}{Manelli and
  Vincent}{2007}]%
        {manelli2007multidimensional}
\bibfield{author}{\bibinfo{person}{Alejandro~M Manelli} {and}
  \bibinfo{person}{Daniel~R Vincent}.} \bibinfo{year}{2007}\natexlab{}.
\newblock \showarticletitle{Multidimensional mechanism design: Revenue
  maximization and the multiple-good monopoly}.
\newblock \bibinfo{journal}{\emph{Journal of Economic theory}}
  (\bibinfo{year}{2007}).
\newblock


\bibitem[\protect\citeauthoryear{Medina and Vassilvitskii}{Medina and
  Vassilvitskii}{2017}]%
        {medina2017revenue}
\bibfield{author}{\bibinfo{person}{Andr{\'e}s~Mu{\~n}oz Medina} {and}
  \bibinfo{person}{Sergei Vassilvitskii}.} \bibinfo{year}{2017}\natexlab{}.
\newblock \showarticletitle{Revenue optimization with approximate bid
  predictions}. In \bibinfo{booktitle}{\emph{Proceedings of NIPS}}.
\newblock


\bibitem[\protect\citeauthoryear{Mohri and Munoz}{Mohri and Munoz}{2015}]%
        {mohri2015revenue}
\bibfield{author}{\bibinfo{person}{Mehryar Mohri} {and} \bibinfo{person}{Andres
  Munoz}.} \bibinfo{year}{2015}\natexlab{}.
\newblock \showarticletitle{Revenue optimization against strategic buyers}. In
  \bibinfo{booktitle}{\emph{Proceedings of NIPS}}.
\newblock


\bibitem[\protect\citeauthoryear{Morgenstern and Roughgarden}{Morgenstern and
  Roughgarden}{2015}]%
        {morgenstern2015pseudo}
\bibfield{author}{\bibinfo{person}{Jamie~H Morgenstern} {and}
  \bibinfo{person}{Tim Roughgarden}.} \bibinfo{year}{2015}\natexlab{}.
\newblock \showarticletitle{On the pseudo-dimension of nearly optimal
  auctions}. In \bibinfo{booktitle}{\emph{Proceedings of NIPS}}.
\newblock


\bibitem[\protect\citeauthoryear{Myerson}{Myerson}{1981}]%
        {Myerson81}
\bibfield{author}{\bibinfo{person}{R.~B. Myerson}.}
  \bibinfo{year}{1981}\natexlab{}.
\newblock \showarticletitle{Optimal Auction Design}. In
  \bibinfo{booktitle}{\emph{Math. Oper. Res.}}, Vol.~\bibinfo{volume}{6}.
\newblock


\bibitem[\protect\citeauthoryear{Nedelec, Abeille, Calauz{\`e}nes, Karoui,
  Heymann, and Perchet}{Nedelec et~al\mbox{.}}{2018}]%
        {nedelec2018thresholding}
\bibfield{author}{\bibinfo{person}{Thomas Nedelec}, \bibinfo{person}{Marc
  Abeille}, \bibinfo{person}{Cl{\'e}ment Calauz{\`e}nes},
  \bibinfo{person}{Noureddine~El Karoui}, \bibinfo{person}{Benjamin Heymann},
  {and} \bibinfo{person}{Vianney Perchet}.} \bibinfo{year}{2018}\natexlab{}.
\newblock \showarticletitle{Thresholding the virtual value: a simple method to
  increase welfare and lower reserve prices in online auction systems}.
\newblock \bibinfo{journal}{\emph{arXiv preprint arXiv:1808.06979}}
  (\bibinfo{year}{2018}).
\newblock


\bibitem[\protect\citeauthoryear{Nedelec, El~Karoui, and Perchet}{Nedelec
  et~al\mbox{.}}{2019}]%
        {nedelec2019learning}
\bibfield{author}{\bibinfo{person}{Thomas Nedelec}, \bibinfo{person}{Noureddine
  El~Karoui}, {and} \bibinfo{person}{Vianney Perchet}.}
  \bibinfo{year}{2019}\natexlab{}.
\newblock \showarticletitle{Learning to bid in revenue-maximizing auctions}.
\newblock \bibinfo{journal}{\emph{Proceedings of ICML}} (\bibinfo{year}{2019}).
\newblock


\bibitem[\protect\citeauthoryear{Ostrovsky and Schwarz}{Ostrovsky and
  Schwarz}{2011}]%
        {OstSch11}
\bibfield{author}{\bibinfo{person}{M. Ostrovsky} {and} \bibinfo{person}{M.
  Schwarz}.} \bibinfo{year}{2011}\natexlab{}.
\newblock \showarticletitle{Reserve prices in internet advertising auctions: A
  field experiment}. In \bibinfo{booktitle}{\emph{Proceedings of EC}}.
\newblock


\bibitem[\protect\citeauthoryear{Paes~Leme, Pal, and Vassilvitskii}{Paes~Leme
  et~al\mbox{.}}{2016}]%
        {paes2016field}
\bibfield{author}{\bibinfo{person}{Renato Paes~Leme}, \bibinfo{person}{Martin
  Pal}, {and} \bibinfo{person}{Sergei Vassilvitskii}.}
  \bibinfo{year}{2016}\natexlab{}.
\newblock \showarticletitle{A field guide to personalized reserve prices}. In
  \bibinfo{booktitle}{\emph{Proceedings of WWW}}.
\newblock


\bibitem[\protect\citeauthoryear{Papernot, McDaniel, Goodfellow, Jha, Celik,
  and Swami}{Papernot et~al\mbox{.}}{2017}]%
        {papernot2017practical}
\bibfield{author}{\bibinfo{person}{Nicolas Papernot}, \bibinfo{person}{Patrick
  McDaniel}, \bibinfo{person}{Ian Goodfellow}, \bibinfo{person}{Somesh Jha},
  \bibinfo{person}{Z~Berkay Celik}, {and} \bibinfo{person}{Ananthram Swami}.}
  \bibinfo{year}{2017}\natexlab{}.
\newblock \showarticletitle{Practical black-box attacks against machine
  learning}. In \bibinfo{booktitle}{\emph{Proceedings of the 2017 ACM on Asia
  conference on computer and communications security}}.
\newblock


\bibitem[\protect\citeauthoryear{Shen, Lahaie, and Leme}{Shen
  et~al\mbox{.}}{2019a}]%
        {shen2019learning}
\bibfield{author}{\bibinfo{person}{Weiran Shen}, \bibinfo{person}{S{\'e}bastien
  Lahaie}, {and} \bibinfo{person}{Renato~Paes Leme}.}
  \bibinfo{year}{2019}\natexlab{a}.
\newblock \showarticletitle{Learning to Clear the Market}. In
  \bibinfo{booktitle}{\emph{Proceeding of ICML}}.
\newblock


\bibitem[\protect\citeauthoryear{Shen, Tang, and Zuo}{Shen
  et~al\mbox{.}}{2019b}]%
        {shen2019automated}
\bibfield{author}{\bibinfo{person}{Weiran Shen}, \bibinfo{person}{Pingzhong
  Tang}, {and} \bibinfo{person}{Song Zuo}.} \bibinfo{year}{2019}\natexlab{b}.
\newblock \showarticletitle{Automated mechanism design via neural networks}. In
  \bibinfo{booktitle}{\emph{Proceedings of AAMAS}}.
\newblock


\bibitem[\protect\citeauthoryear{Tang and Zeng}{Tang and Zeng}{2018}]%
        {tang2016manipulate}
\bibfield{author}{\bibinfo{person}{Pingzhong Tang} {and}
  \bibinfo{person}{Yulong Zeng}.} \bibinfo{year}{2018}\natexlab{}.
\newblock \showarticletitle{The price of prior dependence in auctions}. In
  \bibinfo{booktitle}{\emph{Proceedings of EC}}.
\newblock


\bibitem[\protect\citeauthoryear{Wierstra, Schaul, Peters, and
  Schmidhuber}{Wierstra et~al\mbox{.}}{2008}]%
        {wierstra2008natural}
\bibfield{author}{\bibinfo{person}{Daan Wierstra}, \bibinfo{person}{Tom
  Schaul}, \bibinfo{person}{Jan Peters}, {and} \bibinfo{person}{Juergen
  Schmidhuber}.} \bibinfo{year}{2008}\natexlab{}.
\newblock \showarticletitle{Natural evolution strategies}. In
  \bibinfo{booktitle}{\emph{2008 IEEE Congress on Evolutionary Computation
  (IEEE World Congress on Computational Intelligence)}}. IEEE,
  \bibinfo{pages}{3381--3387}.
\newblock


\bibitem[\protect\citeauthoryear{Williams}{Williams}{1992}]%
        {williams1992simple}
\bibfield{author}{\bibinfo{person}{Ronald~J Williams}.}
  \bibinfo{year}{1992}\natexlab{}.
\newblock \showarticletitle{Simple statistical gradient-following algorithms
  for connectionist reinforcement learning}.
\newblock \bibinfo{journal}{\emph{Machine learning}} \bibinfo{volume}{8},
  \bibinfo{number}{3-4} (\bibinfo{year}{1992}), \bibinfo{pages}{229--256}.
\newblock


\bibitem[\protect\citeauthoryear{Yao}{Yao}{2017}]%
        {yao2017dominant}
\bibfield{author}{\bibinfo{person}{Andrew Chi-Chih Yao}.}
  \bibinfo{year}{2017}\natexlab{}.
\newblock \showarticletitle{Dominant-strategy versus bayesian multi-item
  auctions: Maximum revenue determination and comparison}. In
  \bibinfo{booktitle}{\emph{Proceedings of EC}}.
\newblock


\bibitem[\protect\citeauthoryear{Zhang, Cheng, and Conitzer}{Zhang
  et~al\mbox{.}}{2019}]%
        {zhang2019samples}
\bibfield{author}{\bibinfo{person}{Hanrui Zhang}, \bibinfo{person}{Yu Cheng},
  {and} \bibinfo{person}{Vincent Conitzer}.} \bibinfo{year}{2019}\natexlab{}.
\newblock \showarticletitle{When Samples Are Strategically Selected}. In
  \bibinfo{booktitle}{\emph{Proceedings of ICML}}.
\newblock


\end{thebibliography}


\end{document}